\title{A free-boundary problem for concrete carbonation: Rigorous justification of the $\sqrt{{t}}$-law of propagation}
\author{Toyohiko Aiki\thanks{Department of Mathematics, Faculty of Education, Gifu University,
Yanagido 1-1, Gifu, 501-1193,    Japan ({\tt aiki@gifu-u.ac.jp}).}
        \and Adrian Muntean\thanks{CASA - Centre for Analysis, Scientific computing and Applications,
Department of Mathematics and Computer Science, Institute for
Complex Molecular Systems (ICMS), Eindhoven University of
Technology, PO Box 513, 5600 MB Eindhoven, The Netherlands ({\tt
a.muntean@tue.nl}).}}
\begin{document}

\maketitle

\begin{abstract}
We study a one-dimensional free-boundary problem describing the
penetration of carbonation fronts (free reaction-triggered
interfaces) in concrete. A couple of decades ago, it was observed
experimentally that the penetration depth versus time curve (say
$s(t)$ vs. $t$) behaves like $s(t)=C\sqrt{t}$  for sufficiently
large times  $t > 0$ (with $C$ a positive constant). Consequently,
many fitting arguments solely based on this experimental law were
used to predict the large-time behavior of  carbonation fronts in
real structures, a theoretical justification of the $\sqrt{t}$-law
being lacking until now.

The aim of this paper is to fill this gap by justifying rigorously
the experimentally guessed asymptotic behavior. We have previously
proven the upper bound $s(t)\leq C'\sqrt{t}$ for some constant $C'$;
now we show  the optimality of the rate by proving the right
nontrivial lower estimate, i.e. there exists $C''>0$ such that
$s(t)\geq C''\sqrt{t}$. Additionally, we obtain weak solutions to
the free-boundary problem for the case when the measure of the
initial domain vanishes. In this way, our mathematical model is now
allowing for the appearance of a moving carbonation front -- a
scenario that until was hard to handle from the analysis point of
view.
\end{abstract}

\begin{keywords}
Free-boundary problem, concrete carbonation, large-time behavior,
$\sqrt{t}$-law of propagation, appearance of a carbonation front,
phase change;
\end{keywords}

\begin{AMS}
35R35, 35B40, 80A22
\end{AMS}

\pagestyle{myheadings} \thispagestyle{plain} \markboth{T. AIKI AND
A. MUNTEAN}{Justification of the $\sqrt{t}$-law}

\section{Introduction}\label{intro}

\subsection{Background}
Environmental impact on concrete parts of buildings results in a
variety of unwanted chemical and chemically-induced mechanical
changes. The bulk of these changes leads to damaging and
destabilization of the concrete itself or of the reinforcement
embedded in the concrete. One important destabilization factor is
the drop in pH near the steel bars induced by carbonation of the
alkaline constituents; see for instance
\cite{Ishida1,Ishida2,Thiery} and \cite{Kris} for technical details
and \cite{Mun1,Ai-Mun5} for an introduction to the mathematical
modeling of the situation\footnote{Remotely related mathematical
approaches of similar reaction-diffusion scenarios have been
reported, for instance, in \cite{Hilhorst, Natalini,Du}.}. The
destabilization is caused by atmospheric carbon dioxide diffusing in
the dry parts and reacting in the wet parts of the concrete pores.
The phenomenon is considered as one of the major processes inducing
corrosion in concrete. A particular feature of carbonation is the
formation of macroscopic sharp reaction interfaces or thin reaction
layers that progress into the unsaturated concrete-based materials.
The deeper cause for the formation of these patterns is not quite
clear, although the major chemical and physical reasons seem to be
known.

Mathematically, the proposed model is a coupled system of
semi-linear partial differential equations posed in a single 1D
moving domains. The moving interface  (front position in 1D) is
assumed to be triggered by a fast chemical reaction  -- the
carbonation reaction. Non-linear transmission conditions of
Rankine-Hugoniot type are imposed across the inner boundary that
separates the carbonated regions from the uncarbonated ones. The
movement of the carbonated region is determined via a non-local
dynamics law.

The key objective  is not only to understand the movement of a
macroscopic sharp reaction front in concrete but rather to {\em
predict the penetration depth after a sufficient large time}.

 A couple of decades ago, it was
observed experimentally that the penetration depth versus time curve
(say $s(t)$ vs. $t$) behaves like $s(t)=C\sqrt{t}$  for sufficiently
large times  $t > 0$ (with $C$ a positive constant). Consequently,
many fitting arguments solely based on this experimental law were
used to predict the large-time behavior of carbonation fronts in
real structures, a theoretical justification of the $\sqrt{t}$-law
being lacking until now.

This is the place where our paper contributes: We want to fill this
gap by justifying rigorously the experimentally guessed asymptotic
behavior.

\subsection{Basic carbonation scenario--a moving one-phase approach}
 We study a one-dimensional free boundary problem
system arising in the modeling of concrete carbonation problem. We
consider that the concrete occupies the infinite interval
$(0,\infty)$ and that there exists a sharp interface $x = s(t)$, $t
> 0$ separating the carbonated from the uncarbonated zone. The whole process can be seen as a solid-solid phase change; see the two colors in Fig. \ref{fig1} (left).
One color points out to $CaCO_3$ (carbonated phase), while the other
one indicates $Ca(OH)_2$ (uncarbonated phase). The zone of interest
is only one of the solid phases, namely the carbonated zone. We
denote it by $Q_s(T)$ and, in mathematical terms, this is defined by
$Q_s(T) := \{ (t,x): 0 < t < T, 0 < x < s(t)\}$ for some $T
> 0$. Throughout this paper $u$ and $v$ denote the mass
concentrations of CO$_2$ in air and water, respectively. As
mentioned in \cite{Ai-Mun5}, $s$, $u$ and $v$ satisfy the following
system P $= $P$(s_0, u_0, v_0, g, h)$ (\ref{eq:a10}) $\sim$
(\ref{eq:a20}):
\begin{eqnarray}
& &
u_t - (\kappa_1 u_x)_x = f(u,v) \quad \mbox{ in } Q_s(T), \label{eq:a10} \\
& & v_t - (\kappa_2 v_x)_x = - f(u,v) \quad  \mbox{ in } Q_s(T), \\
& & u(t,0) = g(t), v(t,0) = h(t) \quad \mbox{ for } 0 \leq t \leq T, \\
& & s'(t) =  \psi(u(t,s(t)))  \quad \mbox{ for } 0 < t < T, \label{FBC} \\
& &  - \kappa_1 u_x(t,s(t)) = \psi(u(t,s(t)))  + s'(t) u(t,s(t))   \quad \mbox{ for } 0 < t < T, \\
& &  - \kappa_2 v_x(t,s(t)) = s'(t) v(t, s(t)) \quad \mbox{ for } 0 < t < T, \\
& & s(0) = s_0 \mbox{ and } u(0,x) = u_0, v(0,x) = v_0 \quad \mbox{
for } 0 < x < s_0, \label{eq:a20}
\end{eqnarray}
where $\kappa_1$ (resp. $\kappa_2$) is a diffusion constant of
CO$_2$ in air (resp. water), $f(u,v) := \beta(\gamma v - u)$ is an
effective Henry's law, where $\beta$ and $\gamma$ are positive
constants, $g$ and $h$ are given functions corresponding to boundary
conditions for $u$ and $v$, respectively, $\psi(r) := \alpha
|[r]^+|^p$ for $r \in R$ describes the rate of the carbonation
reaction, where $p \geq 1$ and $\alpha$ is a positive
constant\footnote{The exponent $p$ is sometimes called order of the
chemical reaction, while  the parameter $\alpha$ is just a
proportionality constant. Its sensitivity with respect to the model
output $(s,u,v)$ has been studied numerically in \cite{Chem}.}.
$s_0\geq 0$ is the initial position of the free boundary, while
$u_0$ and $v_0$ are the initial concentrations.

First mathematical models with free boundaries for describing the
concrete carbonation process have been proposed by Muntean and
B\"ohm in \cite{Mun1,Mun2}, where the first mathematical results
concerning the global existence and uniqueness of weak solutions as
well as the stability of the solutions with respect to data and
parameters have been investigated.  Recently, we have improved their
results by focussing a reduced free-boundary model still able to
capturing the basic features of the carbonation process; see
\cite{Ai-Mun5} for the reduced model and  \cite{Ai-Mun1,Ai-Mun4} for
the list of the new theorems on the existence and uniqueness of weak
solutions to P. This model is in some sense minimal: It includes the
transport of species (diffusion), their averaged transfer across
air-water interfaces (the Henry law), as well as fast reaction (with
an indefinitely large chemical compound - "the concrete"). We have
used further the advantageous structure of the reduced model to
study the large-time behavior of the penetration depths. Basically,
we started to wonder whether the experimentally known {\em
$\sqrt{t}$-law}
$$s(t) = \bar{C}\sqrt{t} \mbox{ for } t >0,$$ where $\bar{C}$ is a
positive constant, is true or not \cite{limitations}. Let us comment
a bit on the context: It was shown in \cite{Thiery} (pp. 193-199)
that the carbonation front behaves like a similarity solution to a
one-phase Stefan-like problem \cite{Rubinstein}. Using
matched-asymptotics techniques, the fast-reaction limit (for large
Thiele moduli) done in \cite{Mec} for a reaction-diffusion system
also led to a $\sqrt{t}$-behavior of the carbonation front
supporting experimental results from \cite{Ishida1,Ishida2}, e.g.
\begin{figure}[hptb]
\begin{minipage}{\linewidth}
\centerline{
\begin{minipage}{.30\linewidth}
\begin{center}\includegraphics[width=\linewidth]{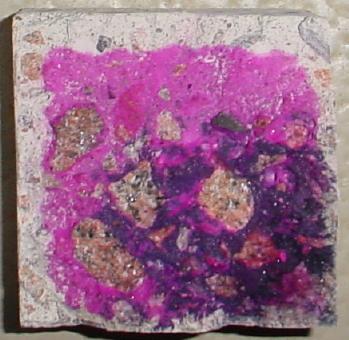}\\\end{center}
\end{minipage}
\begin{minipage}{.51\linewidth}
\begin{center}\includegraphics[width=1.2\linewidth]{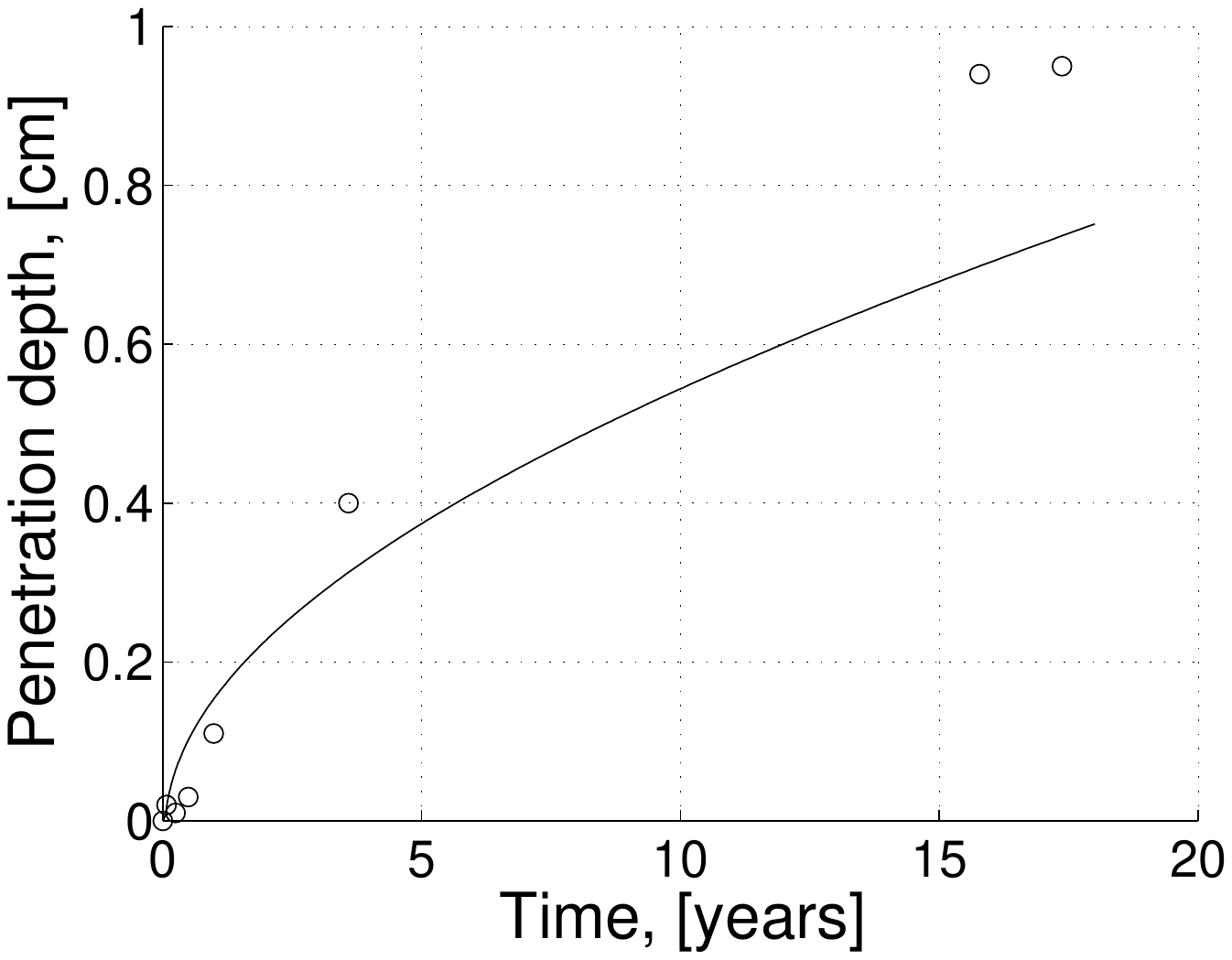}\\\end{center}
\end{minipage}
}
\end{minipage}\caption{(Left) Typical result of the phenolphthalein test
  on a partially carbonated sample (Courtesy of Prof. Dr. Max Setzer, University
  of Duisburg-Essen, Germany). The dark region indicates the
  uncarbonated part, while the brighter one points out the carbonated part. The
  two regions are separated by a sharp interface moving inwards the material. In this colorimetric
  test, this macroscopic
  interface corresponds to a drop in pH below 10. (Right) Computed interface positions vs. measured penetration depths \cite{Chem}.}\label{fig1}
\end{figure}
On the other hand, experimental results from \cite{Kris} indicate
that, depending on the type of the cement, a variety of $t^\beta$
front behaviors with $\beta \neq \frac{1}{2}$ are possible.
Furthermore, Souplet, Fila and collaborators (compare
\cite{Souplet,Fila}) have shown that, under certain conditions,
non-homogeneous Stefan-like problem can lead to asymptotics like
$s(t)\sim t^\frac{1}{3}$. Somehow, the major question remains:

{\em What is  the correct asymptotics of the carbonation front
propagation?}

The main result of our preliminary investigations (based on the
reduced FBP) is reported in \cite{Ai-Mun6} and supports the fact
that
$$ s(t) \to \infty \mbox{ as } t \to \infty \mbox{ and }
 s(t) \leq C' \sqrt{t} \mbox{ for } t \geq 0,  $$
where $C'$ is a positive constant. Moreover, in this paper we
establish a result on the lower estimate for the free boundary $s$
as follows: For some positive constant $c$
\begin{equation}
s(t) \geq c \sqrt{t} \mbox{ for } t \geq 0. \label{result1}
\end{equation}
This estimate combined with the corresponding lower one would
immediately guarantee the correctness of the $\sqrt{t}$-law from a
mathematical modeling point of view. In section \ref{large} we
derive the missing lower bound.

Note that since, generally, $k_1\gg k_2$ and $\gamma v\neq u$, the
system (\ref{eq:a10})--(\ref{eq:a20}) cannot be reduced to a scalar
equation, where the use of Green functions representation
\cite{Cannon,Rubinstein} would very much facilitate the obtaining of
non-trivial lower bounds on concentrations, and hence, on the free
boundary velocity. Furthermore, by using the similar method as the
one used in the proof of (\ref{result1}), we can construct a weak
solution to P satisfying  $s_0 = 0$. It is worth mentioning that
Fasano and Primicerio (cf. \cite{FP2}, e.g.) have investigated a
one-phase Stefan problem when the measure of the initial domain
vanishes.  In their proof the comparison principle is used in an
essential manner. However, for our problem P we do not have any
comparison theorem for the free boundary. Our idea here is to
develop a method to obtain improved uniform estimates for solutions
and then use these estimates to prove the existence of weak
solutions for the case $s_0 = 0$. This program is realized in
section \ref{front}. There are neither physical nor mathematical  reasons to believe that uniqueness of weak solutions  for the case $s_0 = 0$ would not hold. However, since our fixing-domain technique is not applicable anymore, the uniqueness seems to be difficult to prove.

\section{Large-time behavior of the free boundary}
\label{large} In order to give a statement of our result on the
large-time behavior of (weak) solutions, we consider the problem P
posed in the cylindrical domain $Q(T) := (0,T) \times (0,1)$. To
this end, we use the following change of variables:

Let
\begin{equation} \label{transf}
\bar{u}(t,y) = u(t,s(t)y) \mbox{ and } \bar{v}(t,y) = v(t,s(t)y)
\mbox{  for } (t,y) \in Q(T). \label{change}
\end{equation}
 Then, it holds that
  \begin{eqnarray*}
& & {\bar u}_t - \frac{\kappa_1}{s^2} \bar{u}_{yy} - \frac{s'}{s} y
\bar{u}_y = f(\bar{u}, \bar{v})
 \quad \mbox{ in } Q(T),  \\
& & {\bar v}_t - \frac{\kappa_2}{s^2} \bar{v}_{yy} - \frac{s'}{s} y
\bar{v}_y = - f(\bar{u}, \bar{v})
 \quad \mbox{ in } Q(T), \\
& & \bar{u}(0,t) = g(t), \bar{v}(0,t) = h(t) \quad \mbox{ for } 0 < t < T, \\
& & s'(t) =  \psi(\bar{u}(t,1))  \quad \mbox{ for } 0 < t < T, \\
& & - \frac{\kappa_1}{s(t)} {\bar u}_y(t,1) = s'(t)  \bar{u}(t,1) +
  s'(t)  \quad \mbox{ for } 0 < t < T, \\
& & - \frac{\kappa_2}{s(t)} \bar{v}_y(t,1) = s'(t) \bar{v}(t, 1)
         \quad \mbox{ for } 0 < t < T, \\
& & s(0) = s_0, \bar{u}(0,y) = \bar{u}_0(y), \bar{v}(0,y) =
\bar{v}_0(y) \quad \mbox{ for } 0 < y < 1,
\end{eqnarray*}
where $\bar{u}_0(y) = u_0(s_0 y)$ and $\bar{v}_0(y) = v_0(s_0 y)$
for $y \in [0,1]$.

For simplicity, we introduce some notations as follows: $H  :=
L^2(0,1)$, $X := \{z \in H^1(0,1): z(0) = 0\}$,  $X^*$ is the dual
space of $X$, $$V(T) := L^{\infty}(0,T; H) \cap L^2(0,T; H^1(0,1))$$
and $$V_0(T) := V(T) \cap L^2(0,T; X),$$ and $(\cdot,\cdot)_H$ and
$\langle \cdot, \cdot \rangle_X$ denote the usual inner product of
$H$ and the duality pairing between $X$ and $X^*$, respectively.

First of all, we define a weak solution of P$(s_0, u_0, v_0, g, h)$.
To do this, we use a similar concept of weak solution as the one
introduced in \cite{Ai-Mun1}.

\begin{definition}
Let $s$ be a function on $[0,T]$ and $u$, $v$ be functions on
$Q_s(T)$ for $0 < T < \infty$, and $\bar{u}$ and $\bar{v}$ be
functions defined by (\ref{change}). We call that a triplet $\{s, u,
v\}$ is a weak solution of P on $[0,T]$
if the conditions (S1) $\sim$ (S5) hold: \\
(S1) $s \in W^{1,\infty}(0,T)$ with $s > 0$ on $[0,T]$,
$(\bar{u},\bar{v}) \in (W^{1,2}(0,T; X^*) \cap V(T) \cap
L^{\infty}(Q(T)))^2$.
\\
(S2) $\bar{u} - g, \bar{v} - h \in L^2(0,T; X)$, $u(0) = u_0$ and
$v(0) =  v_0$.
\\
(S3) $s'(t) = \psi(u(t,s(t))$ for a.e. $t \in [0,T]$ and $s(0) =
s_0$.
$$
\int_0^{T} \langle \bar{u}_t, z \rangle_X dt + \int_{Q(T)}
\frac{\kappa_1}{s^2} \bar{u}_y z_y dy dt + \int_0^{T}
 \frac{s'}{s} (\bar{u}(\cdot,1) + 1 )z(\cdot,1) dt  \leqno{\mbox{(S4)}}
 $$
$$ = \int_{Q(T)} (f(\bar{u},\bar{v}) + \frac{s'}{s} y \bar{u}_y) z dydt
 \quad \mbox{ for }
 z \in V_0(T).
$$
$$
\int_0^{T} \langle \bar{v}_t, z \rangle_X dt + \int_{Q(T)}
\frac{\kappa_2}{s^2} \bar{v}_y z_y dy dt + \int_0^{T}
 \frac{s'}{s} \bar{v}(\cdot,1)  z(\cdot,1) dt  \leqno{\mbox{(S5)}}
 $$
$$ = \int_{Q(T)} (- f(\bar{u},\bar{v}) + \frac{s'}{s} y \bar{v}_y) z dy dt
 \quad \mbox{ for }
 z \in V_0(T).
$$
Moreover, let $s$ be a function on $[0,\infty)$, and $u$ and $v$ be
functions on $Q_s := \{ (t,x)| t > 0, 0 < x < s(t)\}$. We say that
$\{s, u, v\}$ is a weak solution of P on $[0,\infty)$ if for any $T
> 0$ the triplet $\{s, u, v\}$  is a weak solution of P on $[0,T]$.
\end{definition}

Before recalling our results concerning the global existence and
uniqueness of weak solutions to P on the time interval $[0,T]$, $T >
0$, we give the following assumptions for the involved data and
model parameters:

(A1) $f(u,v) =  \beta( \gamma v - u)$ for any $(u,v) \in {\mathbf
R}^2$ where $\beta$ and $\gamma$ are positive constants.

(A2) $g, h \in W^{1,2}_{loc}([0,\infty)) \cap L^{\infty}(0,\infty)$,
and $g \geq 0$ and $h \geq 0$ on $(0,\infty)$.

(A3) $u_0 \in L^{\infty}(0,s_0)$ and $v_0 \in L^{\infty}(0,s_0)$
with $u_0 \geq 0$ and $v_0 \geq 0$ on $(0,s_0)$.

\begin{theorem} \label{previous}
(cf. \cite[Theorems 1.1 and 1.2, Lemma 4.1]{Ai-Mun1}) If   (A1)
$\sim$ (A3) hold, then P has one and only one weak nonnegative
solution on $[0,\infty)$.
\end{theorem}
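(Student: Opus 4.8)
The plan is to recast the free-boundary problem P as a fixed-domain problem—precisely the transformation (\ref{transf}) already carried out above—and then to run a fixed-point scheme on the unknown interface $s$. Working on a short interval $[0,T_0]$, I would introduce the set of admissible interfaces
$$\Sigma(M,T_0) := \{ s \in W^{1,\infty}(0,T_0) : s(0) = s_0,\ 0 \le s' \le M,\ s \ge s_0 \},$$
with the $C[0,T_0]$ topology. For a frozen $s \in \Sigma(M,T_0)$ the transformed system becomes a \emph{linear}, uniformly parabolic system for $(\bar u,\bar v)$ on $Q(T_0)$: since $s \ge s_0 > 0$ the coefficients $\kappa_i/s^2$ and $s'/s$ are bounded, and the conditions at $y=1$ are Robin-type and linear in the trace. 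Standard parabolic theory then produces a unique $(\bar u,\bar v) \in (W^{1,2}(0,T_0;X^*)\cap V(T_0)\cap L^\infty(Q(T_0)))^2$, and I would close the loop by defining $\mathcal{F}: s \mapsto \tilde s$ through $\tilde s(t) := s_0 + \int_0^t \psi(\bar u(\tau,1))\,d\tau$. By construction a weak solution of P is exactly a fixed point of $\mathcal{F}$.

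For local existence I would first establish the a priori bound $0 \le \bar u,\bar v \le K$, with $K$ depending only on $\|g\|_\infty,\|h\|_\infty,\|u_0\|_\infty,\|v_0\|_\infty$, by testing the equations against the negative parts $[\bar u]^-,[\bar v]^-$ and against the overshoots $[\bar u-K]^+,[\bar v-K]^+$, exploiting the sign structure of $f=\beta(\gamma v-u)$ and the favorable sign of the boundary integrals at $y=1$ (here (A2)--(A3) enter). This bound gives $\psi(\bar u(\cdot,1)) \le \alpha K^p =: M$, so $\mathcal{F}$ maps $\Sigma(M,T_0)$ into itself; the parabolic estimates plus the trace theorem make $\mathcal{F}$ compact, and Schauder's theorem then yields a local weak solution (on a sufficiently small $T_0$ one may instead show $\mathcal{F}$ is a contraction in $C[0,T_0]$). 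Global continuation is clean here: because $s'(t)=\psi(\bar u(t,1)) \ge 0$, the interface is nondecreasing, so $s(t) \ge s_0 > 0$ throughout and the degeneracy $s \to 0$ that plagues the $s_0=0$ case never arises. Since $K$ is independent of $T_0$, we have $s' \le \alpha K^p$ uniformly—no finite-time blow-up and no vanishing of $\kappa_i/s^2$—so the local construction iterates and patches to a global solution on $[0,\infty)$ satisfying (S1)--(S5).

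For uniqueness I would take two weak solutions $\{s_i,u_i,v_i\}$, $i=1,2$, and estimate the differences $\bar u_1-\bar u_2$, $\bar v_1-\bar v_2$ and $s_1-s_2$ in the fixed-domain formulation. Subtracting (S4)--(S5) and testing with the differences, the delicate contributions are those in which the $s$-dependent coefficients $\kappa_i/s_j^2$, $s_j'/s_j$ and the trace boundary terms disagree; each can be split as a difference of coefficients—controlled by $|s_1-s_2|+|s_1'-s_2'|$—times a bounded factor. Since $|s_1(t)-s_2(t)| \le \int_0^t |\psi(\bar u_1(\cdot,1))-\psi(\bar u_2(\cdot,1))|\,d\tau$ is itself governed by the trace of the difference, a Gronwall argument on the resulting energy inequality forces all differences to vanish.

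I expect the principal obstacle to be the trace $\bar u(\cdot,1)$, which appears simultaneously in the boundary integrals of (S4)--(S5) and in the dynamics $s'=\psi(\bar u(\cdot,1))$, so that the a priori bounds, the fixed-point estimates, and the uniqueness Gronwall all hinge on controlling this boundary value by the interior $V(T)$-norm via a time-uniform trace inequality, and on absorbing the $s'/s$ boundary terms with the correct sign. Keeping every constant in this step independent of $T_0$ is precisely what upgrades the local argument to the global-in-time statement of the theorem.
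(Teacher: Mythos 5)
First, note that the paper itself offers no proof of this statement: Theorem \ref{previous} is recalled verbatim from \cite{Ai-Mun1} (Theorems 1.1 and 1.2 and Lemma 4.1 there), so there is nothing in the present text to compare against line by line. That said, your overall strategy --- pass to the fixed cylinder $Q(T)$ via (\ref{transf}), freeze the interface in an admissible class, solve the resulting parabolic system, close with a fixed point for $s$, and continue globally using the monotonicity of $s$ and the uniform $L^\infty$ bound --- is precisely the ``fixing-domain technique'' the authors allude to in the introduction and which underlies \cite{Ai-Mun1,Ai-Mun4}; the Gronwall argument on differences in the fixed-domain formulation is likewise the standard uniqueness route for this problem.

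There is, however, one step that fails as written. You assert that for frozen $s$ the problem is linear with a ``favorable sign of the boundary integrals at $y=1$,'' and that testing with $[\bar u]^-$ yields $\bar u\ge 0$. Look at the boundary term in (S4): it is $\frac{s'}{s}(\bar u(\cdot,1)+1)z(\cdot,1)$, where the ``$+1$'' is the consumption term $\psi(u(t,s(t)))=s'(t)$ rewritten using the free-boundary condition (\ref{FBC}). In the genuine nonlinear problem this contribution vanishes wherever $\bar u(t,1)\le 0$, because $\psi(r)=\alpha|[r]^+|^p=0$ for $r\le 0$; that is exactly what makes the negative-part test close. Once you freeze $s$ and treat $s'/s$ as a given nonnegative coefficient, the front keeps withdrawing flux $s'(t)$ at $y=1$ even where $\bar u(t,1)<0$: testing with $z=-[\bar u]^-$ leaves the uncontrolled linear source $\frac{s'}{s}[\bar u]^-(t,1)$ on the right-hand side, Gronwall no longer forces $[\bar u]^-\equiv 0$, and the frozen linear problem genuinely can produce negative values of $\bar u$ near $y=1$. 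You must either retain the nonlinearity $\psi(\bar u(t,1))$ in the auxiliary boundary condition (freezing only the coefficients $\kappa_i/s^2$ and the convective term), or establish nonnegativity only a posteriori at the fixed point, where $s'=\psi(\bar u(\cdot,1))$ holds again. A second, smaller point: the upper bound requires a pair $(g^*,h^*)$ with $g^*=\gamma h^*$ dominating the data, so that $f(g^*,h^*)=0$ as in Lemma \ref{lem2}; a single constant $K$ bounding both $\bar u$ and $\bar v$ does not make the overshoot test close unless $\gamma=1$. With these repairs your outline does reproduce the cited existence and uniqueness argument.
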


The next theorem is the main result of this paper.

\begin{theorem} \label{main1}
If $g(t) = g_*$, $h(t) = h_*$ for $t \in [0,\infty)$, where $g_*$
and $h_*$ are positive constants with $\gamma h_* = g_*$, and (A1)
and (A3) hold, then there exists a positive constant $c$ such that
$$ s(t) \geq c \sqrt{t} \quad \mbox{ for } t \geq 0. $$
\end{theorem}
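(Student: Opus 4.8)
The plan is to bound $s$ from below through a weighted-mass (first-moment) identity for the combination $u+v$, which is natural because the Henry term $f$ enters the two equations with opposite signs and therefore cancels. First I would record the uniform bounds supplied by Theorem~\ref{previous}: there is a constant $M>0$, depending only on the data, with $0\le u,v\le M$ on $Q_s(T)$ for every $T$. These are used throughout.

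Next I introduce the first moment $J(t):=\int_0^{s(t)} x\,(u(t,x)+v(t,x))\,dx$. Differentiating in time, substituting (\ref{eq:a10})--(\ref{eq:a20}), integrating by parts twice and inserting the kinetic law (\ref{FBC}) together with the two flux conditions at $x=s(t)$, all terms at the fixed end $x=0$ drop out because the weight $x$ vanishes there -- only the prescribed Dirichlet values $g_*,h_*$ remain -- and the reaction contributions cancel. This should give the clean identity
$$J'(t)=-\tfrac12\frac{d}{dt}s(t)^2-\kappa_1 u(t,s(t))-\kappa_2 v(t,s(t))+\kappa_1 g_*+\kappa_2 h_*.$$
Integrating in time and using $0\le J(t)\le M\,s(t)^2$ (a consequence of $u+v\le 2M$) converts this into the lower-bound inequality
$$\Big(\tfrac12+M\Big)s(t)^2\ \ge\ (\kappa_1 g_*+\kappa_2 h_*)\,t-\kappa_1\!\int_0^t u(\tau,s(\tau))\,d\tau-\kappa_2\!\int_0^t v(\tau,s(\tau))\,d\tau .$$

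It then remains to show that the two trace integrals absorb strictly less than the leading linear term. For $u$ this is automatic: since $s'(\tau)=\psi(u(\tau,s(\tau)))=\alpha\,u(\tau,s(\tau))^p$ and the upper bound $s(\tau)\le C'\sqrt{\tau}$ of \cite{Ai-Mun6} is available, H\"older's inequality bounds $\int_0^t u(\tau,s(\tau))\,d\tau$ by a constant times $t^{1-1/(2p)}=o(t)$. The decisive term is the $v$-trace integral, which is a priori only $O(t)$, i.e.\ of the same order as the leading term. Here I would prove -- and I expect this to be the main obstacle -- that the time-average of $v(\cdot,s(\cdot))$ stays strictly below the threshold $h_*+\kappa_1 g_*/\kappa_2$, so that $\kappa_2\int_0^t v(\tau,s(\tau))\,d\tau\le\kappa_2 h_*\,t+o(t)$. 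This is precisely where the hypothesis $\gamma h_*=g_*$ is used: it makes $(g_*,h_*)$ a zero of $f$, which, combined with the signs forced by the front flux conditions ($u_x(t,s),v_x(t,s)\le 0$), confines $u$ and $v$ from above by maximum-principle/comparison estimates for the concentrations on the given moving domain (note this comparison is for the concentrations at a \emph{known} front, not for the free boundary itself, for which no comparison principle is available). The genuinely delicate point is that (A3) permits initial overshoots above $h_*$, so one must show that such overshoots are transient and contribute only $o(t)$ to the integral.

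With the $v$-trace bound in hand, the surplus $\kappa_1 g_*\,t>0$ survives in the displayed inequality, yielding $s(t)^2\ge \kappa_1 g_*\,t/(1+2M)-o(t)$. Hence $s(t)\ge c''\sqrt{t}$ for all large $t$, and after shrinking the constant the estimate extends to all $t\ge 0$, which is the assertion. The positivity $\kappa_1 g_*>0$ -- guaranteed by $g_*>0$ -- is exactly what produces the nontrivial lower rate.
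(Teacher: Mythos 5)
Your first-moment identity is exactly the paper's Lemma~\ref{lem1} (your form is obtained from it by writing $\int_0^{s}v_x\,dx=v(t,s(t))-h_*$), and your treatment of the $u$-trace via $u(t,s(t))=(s'/\alpha)^{1/p}$ and H\"older is the same as the paper's estimate (\ref{i2}). The gap is precisely the step you yourself flag as ``the main obstacle'': the bound $\kappa_2\int_0^t v(\tau,s(\tau))\,d\tau\le \kappa_2 h_*t+o(t)$, equivalently that the time-average of the $v$-trace stays strictly below $h_*+\kappa_1 g_*/\kappa_2$. You propose to get this from a maximum-principle/comparison argument using that $(g_*,h_*)$ is a zero of $f$, but no such argument is given, and it is not clear one exists: the available comparison result (Lemma~\ref{lem2}) only yields $v\le h^*$ with $\gamma h^*=g^*\ge\max\{\sup u_0,g_*\}$, so when (A3) allows $v_0$ to overshoot $h_*$ the trivial bound destroys the surplus $\kappa_1 g_*t$, and there is no obvious supersolution forcing the trace back down on a growing domain whose front condition $-\kappa_2v_x=s'v$ is coupled to $s'$ (hence to $u$). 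Proving that the overshoot contributes only $o(t)$ to the trace integral is not a routine transient argument here.

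The paper closes this step by a different and softer mechanism, which you should adopt: the energy inequality (Lemma~\ref{lem3}, where the hypothesis $\gamma h_*=g_*$ is actually used, to make the reaction term sign-definite against the test pair $u-g_*$, $\gamma(v-h_*)$) gives $\int_0^t\!\int_0^{s(\tau)}|v_x|^2\,dx\,d\tau\le C_1(1+s(t))$. Then the quantity you need, $\kappa_2\int_0^t(v(\tau,s(\tau))-h_*)\,d\tau=\kappa_2\int_0^t\!\int_0^{s(\tau)}v_x\,dx\,d\tau$, is bounded by Cauchy--Schwarz by $\kappa_2 C_1^{1/2}(1+s(t))^{1/2}(t\,s(t))^{1/2}$, and Young's inequality splits this as $\tfrac14\kappa_1 g_*t+C_3(s(t)+s(t)^2)$. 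Note this term is \emph{not} $o(t)$ (for $s\sim\sqrt t$ it is of order $t$); the argument works only because the excess over $\tfrac14\kappa_1 g_*t$ is a multiple of $s(t)^2$, which joins the $\tfrac12 s(t)^2$ already on the left and is harmless for the conclusion $t\lesssim s(t)^2+s(t)$. So your target estimate is both stronger than what is available and unnecessary; replacing your comparison step by the energy-plus-Cauchy--Schwarz absorption completes the proof along otherwise the same lines. (A minor further point: your extension to small $t$ via ``shrinking the constant'' tacitly uses $s_0>0$, as does the paper's $\nu_0=\min\{s_0,\cdot\}$.)
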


\vskip 12pt The proof of Theorem \ref{main1} relies on three
technical lemmas. We give these auxiliary results in the following.

\begin{lemma}
\label{lem1} (cf. \cite[Lemma 3.3]{Ai-Mun6}) If  (A1) $\sim$ (A3)
hold, then a weak  solution $\{s, u, v\}$ on $[0,\infty)$ satisfies
\begin{eqnarray}
& & \int_0^{s(t)} x u(t) dx + \frac{1}{2} |s(t)|^2 + \kappa_1
\int_0^t u(\tau, s(\tau)) d\tau
   \nonumber \\
& & + \int_0^{s(t)} x v(t) dx + \kappa_2 \int_0^t \int_0^{s(\tau)}
v_x(\tau,x) dx d\tau
  \nonumber \\
& = & \int_0^{s_0} x u_0 dx + \frac{1}{2} |s_0|^2 + \int_0^{s_0} x
v_0 dx +
 \kappa_1 \int_0^t g(\tau) d\tau \quad \mbox{ for } t \geq 0. \nonumber
\end{eqnarray}
\end{lemma}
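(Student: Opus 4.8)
The plan is to multiply each diffusion equation by the weight $x$, integrate over the moving cross-section $(0,s(t))$, and combine the transport (Reynolds) formula with the boundary condition at $x=0$ and the free-boundary conditions at $x=s(t)$; adding the two resulting relations makes the reaction terms cancel, and a final integration in time produces the stated balance. I carry out the computation formally in the original variables $(t,x)$ and comment on its rigorous justification for weak solutions at the end.

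First I treat (\ref{eq:a10}). Multiplying by $x$ and integrating on $(0,s(t))$, the transport formula gives $\int_0^{s(t)} x u_t\,dx = \frac{d}{dt}\int_0^{s(t)} x u\,dx - s'(t)s(t)u(t,s(t))$, while integration by parts together with $u(t,0)=g(t)$ yields $\int_0^{s(t)} x(\kappa_1 u_x)_x\,dx = s(t)\kappa_1 u_x(t,s(t)) - \kappa_1 u(t,s(t)) + \kappa_1 g(t)$. Inserting the free-boundary flux condition together with $s'(t)=\psi(u(t,s(t)))$ in the form $\kappa_1 u_x(t,s(t)) = -s'(t)-s'(t)u(t,s(t))$, the two terms carrying $s'(t)s(t)u(t,s(t))$ cancel, and since $s(t)s'(t)=\frac{1}{2}\frac{d}{dt}|s(t)|^2$ I obtain
\[
\frac{d}{dt}\left[\int_0^{s(t)} x u\,dx + \frac{1}{2}|s(t)|^2\right] + \kappa_1 u(t,s(t)) - \kappa_1 g(t) = \int_0^{s(t)} x f(u,v)\,dx.
\]

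For the $v$-equation I proceed identically but deliberately leave the first-order term un-integrated (this is what produces the double-integral form in the statement): after integration by parts and the substitution $\kappa_2 v_x(t,s(t)) = -s'(t)v(t,s(t))$, the boundary contributions in $v(t,s(t))$ again cancel, leaving
\[
\frac{d}{dt}\int_0^{s(t)} x v\,dx + \kappa_2\int_0^{s(t)} v_x\,dx = -\int_0^{s(t)} x f(u,v)\,dx.
\]
Adding the two identities, the reaction contributions $\pm\int_0^{s(t)} x f(u,v)\,dx$ cancel exactly; integrating the result from $0$ to $t$ and inserting $s(0)=s_0$, $u(0,\cdot)=u_0$, $v(0,\cdot)=v_0$ reproduces precisely the claimed balance.

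The one delicate point, and the main obstacle, is that $\{s,u,v\}$ is merely a weak solution, so the pointwise multiplication by $x$ and the boundary manipulations are not licit as written. To make them rigorous I would pass to the fixed-domain variables (\ref{change}) and run the same computation inside the weak formulations (S4)--(S5), using a test function adapted to the weight $x=s(t)y$; the motion of the domain is then accounted for by the $\frac{s'}{s}$-terms already built into (S4)--(S5), and the regularity recorded in (S1), namely $(\bar{u},\bar{v})\in W^{1,2}(0,T;X^*)\cap V(T)\cap L^{\infty}(Q(T))$ together with $s\in W^{1,\infty}(0,T)$ and $s>0$, legitimizes each integration by parts and the chain rule applied to $|s|^2$. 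Since the identity is quoted from \cite[Lemma 3.3]{Ai-Mun6}, one may alternatively invoke it directly; the computation above is the self-contained derivation behind it.
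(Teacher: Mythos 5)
Your proposal is correct and follows essentially the same route as the paper: the authors take the test function $z(t)=s^2(t)y$ in (S4) and (S5) — which is exactly the weight $x=s(t)y$ in the original variables — and add the two resulting identities so that the $f(u,v)$ terms cancel, which is precisely your computation carried out directly inside the weak formulation. Your closing remark about how to make the formal calculation rigorous is exactly what the paper does.
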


\begin{proof}
Let $T > 0$. In (S4) we can take $z(t) =  s^2(t)y$ for $t \in [0,T]$
so that  we have
\begin{eqnarray}
& & \int_0^{T} \langle \bar{u}_t(t), s^2(t)y \rangle_X dt +
\int_{Q(T)} \kappa_1 \bar{u}_y(t)  dy dt
 + \int_0^{T}
 s'(t)s(t) (\bar{u}(t,1) + 1 )  dt
 \nonumber \\
 &= &
\int_{Q(T)} (y f(\bar{u}(t),\bar{v}(t)) + \frac{s'(t)}{s(t)} y^2
\bar{u}_y(t)) s^2(t)  dydt. \label{b0}
\end{eqnarray}
Here, we note that
\begin{eqnarray}
& &  \int_0^{T} \langle \bar{u}_t(t), s^2(t)y \rangle_X dt
\nonumber \\
& = & - 2 \int_0^T \int_0^1 \bar{u}(t)s'(t) s(t) y dy dt
  + \int_0^1 \bar{u}(T) s^2(T)y dy
  - \int_0^1 \bar{u}(0) s^2(0)y dy
 \nonumber \\
& = & - 2 \int_0^T \int_0^{s(t)}  u(t)  \frac{s'(t)}{s(t)} x dx  dt
  + \int_0^{s(T)}  u(T) x dx
  - \int_0^{s_0} u_0 x dx; \label{b2}
\end{eqnarray}
\begin{eqnarray}
\int_{Q(T)} \kappa_1 \bar{u}_y(t)  dy dt = \kappa_1 \int_0^T
(\bar{u}(t,1)  - g(t)) dt; \label{b3}
\end{eqnarray}
\begin{eqnarray}
\int_0^{T}
 \frac{s'(t)}{s(t)} (\bar{u}(t,1) + 1 ) s^2(t) dt
= \int_0^{T}  s'(t)s(t)\bar{u}(t,1) dt + \frac{1}{2}(s^2(T) -
s_0^2);
\end{eqnarray}
and
\begin{eqnarray}
& & \int_{Q(T)}  \frac{s'(t)}{s(t)} y^2 \bar{u}_y(t) s^2(t)  dydt  \nonumber \\
& = & -2 \int_{Q(T)} s'(t)s(t)  y \bar{u}(t)   dydt
   + \int_0^T s'(t) s(t) \bar{u}(t,1) dt \nonumber \\
& = & -2 \int_{Q(T)} \frac{s'(t)}{s(t)}  u(t)   dxdt
   + \int_0^T s'(t) s(t) x \bar{u}(t,1) dt. \label{b4}
\end{eqnarray}
By substituting (\ref{b2}) $\sim$ (\ref{b4}) into (\ref{b0}) we see
that
\begin{eqnarray*}
 & &   \int_0^{s(T)}  x u(T)  dx + \frac{1}{2}s^2(T) + \kappa_1 \int_0^T u(t,s(t))  dt
\\
&  = &
  \int_0^{s_0}  x u_0  dx + \frac{1}{2}s_0^2 + \kappa_1 \int_0^T g(t) dt
 + \int_{Q_s(T)} f(u,v) x dxdt.
\end{eqnarray*}
Similarly, it follows from (S5) with $z(t) =  s^2(t)y$ that
\begin{eqnarray*}
 \int_0^{s(T)}  x v(T)  dx  + \kappa_2 \int_{Q_s(T)} v_x(\tau,x) dx dt =
  \int_0^{s_0}  x v_0  dx - \int_{Q_s(T)} f(u,v) x dxdt.
\end{eqnarray*}
Adding these two equations leads to the end of the proof of this
lemma.
\end{proof}

Before starting off to providing a proof for the main result of the
paper (Theorem \ref{main1}), we wish to point out in Lemma
\ref{lem2} and in Lemma \ref{lem3} below that our free-boundary
problem allows for positive and uniformly bounded concentrations,
and also, that an energy-like inequality holds.

\begin{lemma}
\label{lem2} (cf. \cite[Lemma 3.2]{Ai-Mun6}) Assume (A1) $\sim$ (A3)
hold,  take positive numbers $g^*$ and $h^*$ satisfying $u_0 \leq
g^*$, $v_0 \leq h^*$ on $[0,s_0]$, $g \leq g^*$, $h \leq h^*$ on
$[0,\infty)$ and $g^* = \gamma h^*$, and let $\{s,u,v\}$ be a weak
solution of P on $[0,\infty)$. Then it holds that
$$ 0 \leq u \leq g^*, 0 \leq v  \leq h^* \mbox{ on } Q_s. $$
\end{lemma}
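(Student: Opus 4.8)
The plan is to prove the four one-sided bounds $u\ge0$, $v\ge0$, $u\le g^*$, $v\le h^*$ by a maximum-principle argument carried out in the fixed cylinder $Q(T)$ on the transformed concentrations $\bar u,\bar v$; since $y\mapsto s(t)y$ is for each $t$ an orientation-preserving bijection of $[0,1]$ onto $[0,s(t)]$, the corresponding bounds on $u,v$ over $Q_s$ follow immediately. Throughout I write $[r]^+=\max(r,0)$ and $[r]^-=\max(-r,0)$, and I will repeatedly use that $\bar u,\bar v\in W^{1,2}(0,T;X^*)\cap V(T)$ makes the truncations admissible test functions in $V_0(T)$ and justifies the chain-rule identities for the $X^*$--$X$ time pairing when the test function is a Lipschitz function of $\bar u$ (resp. $\bar v$) that vanishes at $y=0$.

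For the nonnegativity $u,v\ge0$ I would test (S4) with $z=-[\bar u]^-$ and (S5) with $z=-[\bar v]^-$. These are admissible because $\bar u(t,0)=g(t)\ge0$ and $\bar v(t,0)=h(t)\ge0$ force $[\bar u]^-$ and $[\bar v]^-$ to vanish at $y=0$. Writing $a:=[\bar u]^-$, $b:=[\bar v]^-$, the time terms produce $\frac12\frac{d}{dt}(\|a\|_H^2+\|b\|_H^2)$, the diffusion terms are nonnegative, and the transport terms $\int \frac{s'}{s}y\,\bar u_y(-[\bar u]^-)\,dy\,dt$ are integrated by parts in $y$: the interior contribution yields the favourable $-\frac12\int\frac{s'}{s}\|a\|_H^2\,dt\le0$ (recall $s'=\psi\ge0$, $s>0$), while the boundary contribution at $y=1$ vanishes because $s'(t)=\psi(\bar u(t,1))=0$ whenever $\bar u(t,1)<0$, so that $s'\,[\bar u(t,1)]^-=0$; the same observation kills the flux boundary terms of (S4)--(S5). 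The only term without an obvious sign is the reaction contribution $\int f(\bar u,\bar v)([\bar v]^- -[\bar u]^-)$, and here the explicit form $f=\beta(\gamma\bar v-\bar u)$ is decisive: expanding and using $\bar u[\bar u]^-=-a^2\le0$ and $\bar v[\bar v]^-=-b^2\le0$ to discard the two diagonal terms, while applying Young's inequality to the cross terms, bounds it by $\frac{\beta(\gamma+1)}2(\|a\|_H^2+\|b\|_H^2)$. Adding the two tested identities therefore gives $\frac{d}{dt}(\|a\|_H^2+\|b\|_H^2)\le \beta(\gamma+1)(\|a\|_H^2+\|b\|_H^2)$; since $u_0,v_0\ge0$ implies $a(0)=b(0)=0$, Gronwall's lemma yields $a\equiv b\equiv0$, i.e. $u,v\ge0$.

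For the upper bounds I would exploit the hypothesis $g^*=\gamma h^*$, which makes the constant state $(g^*,h^*)$ a zero of the reaction: $f(g^*,h^*)=\beta(\gamma h^*-g^*)=0$. Setting $\tilde u:=\bar u-g^*$, $\tilde v:=\bar v-h^*$, one has $f(\bar u,\bar v)=\beta(\gamma\tilde v-\tilde u)$, so the structure is identical to the previous step. I would test (S4) with $A:=[\tilde u]^+$ and (S5) with $B:=[\tilde v]^+$, admissible since $g\le g^*$ and $h\le h^*$ make $A,B$ vanish at $y=0$. The diffusion terms are again nonnegative, the reaction term is again bounded by $\frac{\beta(\gamma+1)}2(\|A\|_H^2+\|B\|_H^2)$ by the same algebra, and the key new point is the sign of the free-boundary terms: combining the flux term of (S4) with the boundary part of the integrated transport term gives, whenever $A(t,1)>0$ (so that $\bar u(t,1)=g^*+A(t,1)$), the quantity $\frac{s'}{s}\big((\bar u(t,1)+1)A(t,1)-\frac12 A(t,1)^2\big)=\frac{s'}{s}\big((g^*+1)A(t,1)+\frac12 A(t,1)^2\big)\ge0$, which lands on the good side of the inequality; the analogous computation for $v$ uses $\bar v(t,1)=h^*+B(t,1)$. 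Together with the favourable interior transport term $-\frac12\int\frac{s'}{s}\|A\|_H^2\,dt$, Gronwall applied to $\|A\|_H^2+\|B\|_H^2$ (with zero initial value, since $u_0\le g^*$ and $v_0\le h^*$) forces $A\equiv B\equiv0$, i.e. $u\le g^*$ and $v\le h^*$.

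I expect the main obstacle to be the reaction coupling together with the free-boundary terms at $y=1$: because $f$ has no pointwise sign and P admits no comparison principle, one cannot argue componentwise but must test both equations simultaneously and extract a single Gronwall inequality, all the while checking that every $s'$-weighted boundary term carries the correct sign --- a check that succeeds precisely because $\psi$ is supported on $\{\bar u(t,1)\ge0\}$ and because $(g^*,h^*)$ is a reaction equilibrium. The remaining ingredients (admissibility of the truncations, the chain rule for the $X^*$--$X$ time pairing, and the integrations by parts in $y$) are routine given (S1)--(S5).
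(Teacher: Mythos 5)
The paper gives no proof of Lemma \ref{lem2}; it is quoted verbatim from \cite[Lemma 3.2]{Ai-Mun6}, so there is nothing in-paper to compare against line by line. Your truncation-plus-Gronwall argument is the standard (and almost certainly the cited) route, and I checked the key sign computations: the diagonal reaction terms $\bar u[\bar u]^-=-([\bar u]^-)^2$, $\tilde u[\tilde u]^+=([\tilde u]^+)^2$ do carry the right signs, the cross terms are controlled by Young exactly as you say, and the combined flux/transport boundary quantities $\frac{s'}{s}\bigl((g^*+1)A(t,1)+\frac12A(t,1)^2\bigr)$ and $\frac{s'}{s}\bigl(h^*B(t,1)+\frac12B(t,1)^2\bigr)$ are indeed nonnegative. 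One sentence is imprecise: in the nonnegativity step you claim that ``the same observation kills the flux boundary terms of (S4)--(S5),'' but the (S5) boundary term $\int\frac{s'}{s}\bar v(\cdot,1)z(\cdot,1)\,dt$ is \emph{not} annihilated by $s'=\psi(\bar u(\cdot,1))=0$, since $\psi$ is evaluated at $\bar u(t,1)$, not at $\bar v(t,1)$; what saves you is that with $z=-[\bar v]^-$ this term equals $\int\frac{s'}{s}([\bar v(\cdot,1)]^-)^2\,dt\ge0$ on the left-hand side, and after absorbing the transport boundary contribution $\frac{s'}{2s}([\bar v(t,1)]^-)^2$ from the right-hand side the net term is still $\ge0$. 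So the argument is correct, but that particular justification should be replaced by the sign argument just described.
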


\begin{lemma}
\label{lem3} (cf. \cite[Lemma 3.4]{Ai-Mun6}) Under the same
assumptions as in Theorem \ref{main1} a weak solution $\{s,u,v\}$ of
P on $[0,\infty)$ satisfies
\begin{eqnarray*}
& & \frac{1}{2} \int_0^{s(t)} |u(t) - g_*|^2 dx + \frac{\gamma}{2}
\int_0^{s(t)} |v(t) - h_*|^2 dx
+  \frac{1}{2} \int_0^t |s'(\tau)|^{1+2/p} d\tau \\
& & + \kappa_1 \int_0^t \int_0^{s(\tau)} |u_x(\tau)|^2 dx d\tau +
 \gamma \kappa_2 \int_0^t \int_0^{s(\tau)} |v_x(\tau)|^2 dx d\tau  \\
& \leq &
 \frac{1}{2} \int_0^{s_0} |u_0 - g_*|^2 dx + \frac{\gamma}{2} \int_0^{s_0} |v_0 - h_*|^2 dx
\\
& & + \int_0^t s'(\tau) (\frac{1}{2} |g_*|^2 + g_* +
\frac{\gamma}{2} |h_*|^2) d\tau
 \quad \mbox{ for } t \geq 0.
\end{eqnarray*}
\end{lemma}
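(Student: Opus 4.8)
The plan is to derive the stated inequality as an energy estimate obtained by testing the two diffusion equations against the natural fluctuation unknowns $u-g_*$ and $\gamma(v-h_*)$ and adding the results. Concretely, I would multiply the equation for $u$ by $(u-g_*)$ and the equation for $v$ by $\gamma(v-h_*)$, integrate over the moving cell $(0,s(t))$, and add. Rigorously this is carried out on the fixed domain: one takes $z=\bar u-g_*$ in (S4) and $z=\gamma(\bar v-h_*)$ in (S5), which are admissible because $\bar u(t,0)=g_*$ and $\bar v(t,0)=h_*$ force $z(\cdot,0)=0$, so $z\in V_0(T)$.

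For the time-derivative terms I would use the standard chain rule for $w\in W^{1,2}(0,T;X^*)\cap L^2(0,T;X)$, together with the identity $\int_0^{s(t)}|u-g_*|^2\,dx=s(t)\int_0^1|\bar u-g_*|^2\,dy$; differentiating the latter reproduces the convective $s'/s$-contributions of (S4)/(S5) and a moving-boundary term $\frac{1}{2}|u(t,s(t))-g_*|^2 s'(t)$. For the diffusion terms I integrate by parts in space: the contribution at $x=0$ vanishes because $u-g_*=0=v-h_*$ there, while at $x=s(t)$ I substitute the flux conditions in the form $\kappa_1 u_x(t,s(t))=-s'(t)(1+u(t,s(t)))$ and $\kappa_2 v_x(t,s(t))=-s'(t)v(t,s(t))$, using $s'=\psi(u(t,s(t)))$. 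This produces the dissipation terms $\kappa_1\int u_x^2$ and $\gamma\kappa_2\int v_x^2$ together with explicit boundary terms weighted by $s'(t)$.

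Two algebraic simplifications make the estimate close. First, the reaction contributions combine into $\int_0^{s(t)}[(u-g_*)-\gamma(v-h_*)]f(u,v)\,dx$; since $g_*=\gamma h_*$ the bracket equals $u-\gamma v$, so with $f=\beta(\gamma v-u)$ this integral is $-\beta\int_0^{s(t)}|\gamma v-u|^2\,dx\le 0$ and may be discarded. Second, writing $U=u(t,s(t))$ and $V=v(t,s(t))$, the $s'$-weighted boundary terms collapse (after expanding $-(U-g_*)(1+U)+\frac{1}{2}(U-g_*)^2$ and $-\gamma(V-h_*)V+\frac{\gamma}{2}(V-h_*)^2$) to $s'\big[-\frac{1}{2}U^2-U-\frac{\gamma}{2}V^2\big]+s'\big[\frac{1}{2}g_*^2+g_*+\frac{\gamma}{2}h_*^2\big]$. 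Invoking Lemma \ref{lem2}, $U,V\ge 0$, and $s'=\psi(U)\ge 0$, the terms $-s'U$ and $-\frac{\gamma}{2}s'V^2$ are nonpositive and are dropped, while $-\frac{1}{2}s'U^2$ is moved to the left: since $s'=\psi(U)=\alpha U^p$ one has $s'U^2=\alpha^{-2/p}|s'|^{1+2/p}$, which supplies exactly the dissipation term $\frac{1}{2}\int_0^t|s'|^{1+2/p}$ (with the normalization $\alpha=1$; otherwise the constant $\frac{1}{2}\alpha^{-2/p}$ appears). Integrating in $t$ and using the initial data then yields the claim.

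The part I expect to be most delicate is not the algebra but the rigorous justification of the time differentiation in the weak formulation: one must legitimize the chain rule for the duality pairing $\langle\bar u_t,\bar u-g_*\rangle_X$ on the moving domain and show that the convective $s'/s$-terms in (S4)/(S5) recombine precisely into the moving-boundary contribution $\frac{1}{2}|U-g_*|^2 s'$, rather than treating the computation formally as for a classical solution. This is where the regularity built into (S1), namely $\bar u,\bar v\in W^{1,2}(0,T;X^*)\cap V(T)\cap L^\infty(Q(T))$, is essential, and where an approximation argument (or a direct appeal to the corresponding computation in \cite{Ai-Mun6}) is needed.
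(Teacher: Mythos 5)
Your energy argument is correct in substance, and it is essentially the intended proof: the paper itself does not prove Lemma~\ref{lem3} but imports it from \cite[Lemma 3.4]{Ai-Mun6}, where exactly this computation (test with $u-g_*$ and $\gamma(v-h_*)$, use $g_*=\gamma h_*$ to make the Henry-law term a nonpositive multiple of $\int|\gamma v-u|^2$, and convert the boundary term $-\tfrac12 s'U^2$ into the dissipation $|s'|^{1+2/p}$ via $s'=\psi(U)$) is carried out. Your sign bookkeeping at the free boundary checks out: with $-\kappa_1u_x(t,s(t))=s'(1+U)$ and $-\kappa_2v_x(t,s(t))=s'V$, the combination $-\tfrac12 s'(U-g_*)^2+s'(1+U)(U-g_*)$ collapses to $s'(\tfrac12U^2+U-\tfrac12 g_*^2-g_*)$ and its $v$-analogue to $\gamma s'(\tfrac12V^2-\tfrac12 h_*^2)$, which after discarding the nonpositive pieces yields precisely the right-hand side of the lemma. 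Two small points. First, the rigorous fixed-domain test function should carry the Jacobian weight: you need $z=s(t)(\bar u-g_*)$ in (S4) and $z=\gamma s(t)(\bar v-h_*)$ in (S5), not $z=\bar u-g_*$; otherwise every term acquires a stray $1/s(t)$ factor and the time integral no longer telescopes into $\tfrac12\int_0^{s(t)}|u-g_*|^2dx$. This is consistent with the paper's own proof of Lemma~\ref{lem1}, which tests with $z=s^2(t)y$ rather than $y$; with the $s$-weight the product rule $s\langle\bar u_t,\bar u-g_*\rangle_X=\tfrac12\frac{d}{dt}\left[s\|\bar u-g_*\|_H^2\right]-\tfrac{s'}{2}\|\bar u-g_*\|_H^2$ and the convective term combine exactly as you describe. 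Second, your observation that the coefficient of $\int_0^t|s'|^{1+2/p}$ is really $\tfrac12\alpha^{-2/p}$ is accurate; the statement as printed implicitly normalizes $\alpha=1$ (or absorbs the constant), and this does not affect how the lemma is used in the proof of Theorem~\ref{main1}.
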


\subsection{Proof of Theorem \ref{main1}.} In this section, we give
 the proof of our main result.

\begin{proof}
Let $\{s, u, v\}$ be a weak solution of P on $[0,\infty)$, and $g^*$
and $h^*$ be positive constants defined in Lemma \ref{lem2}. First,
Lemma \ref{lem3} implies that
\begin{eqnarray}
& &  \gamma \kappa_2 \int_0^t \int_0^{s(\tau)} |v_x(\tau)|^2 dx d\tau \nonumber \\
& \leq &
 \frac{1}{2} \int_0^{s_0} (|u_0 - g_*|^2  + \gamma |v_0 - h_*|^2)  dx
 + \int_0^t s'(\tau) (\frac{1}{2} |g_*|^2 + g_* + \frac{\gamma}{2} |h_*|^2) d\tau
\nonumber \\
& \leq &
 \frac{1}{2} \int_0^{s_0} (|u_0 - g_*|^2  + \gamma |v_0 - h_*|^2)  dx \nonumber \\
& &
 +  \left(\frac{1}{2} |g_*|^2 + g_* + \frac{\gamma}{2} |h_*|^2\right) (s(t) - s_0) \quad \mbox{ for } t \geq 0.\nonumber
\end{eqnarray}
Hence, there is a positive constant depending on $u_0, v_0, g_*,
h_*$  and $s_0$ such that
\begin{eqnarray*}
\int_0^t \int_0^{s(\tau)} |v_x(\tau)|^2 dx d\tau
 \leq  C_1 + C_1 s(t)   \quad \mbox{ for } t \geq 0.
\end{eqnarray*}

Next,  on account of Lemma \ref{lem1} we see that
\begin{eqnarray}
& & \int_0^{s(t)} x u(t) dx + \frac{1}{2} |s(t)|^2 + \kappa_1
\int_0^t u(\tau, s(\tau)) d\tau
   \nonumber \\
& & + \int_0^{s(t)} x v(t) dx + \kappa_2 \int_0^t \int_0^{s(\tau)}
v_x(\tau,x) dx d\tau
  \nonumber \\
& = & \int_0^{s_0} x u_0 dx + \frac{1}{2} |s_0|^2 + \int_0^{s_0} x
v_0 dx +
 \kappa_1 \int_0^t g(\tau) d\tau  \nonumber \\
& \geq & \kappa_1 g_* t \quad \mbox{ for } t \geq 0. \nonumber
\end{eqnarray}
Here, we note that
$$ u(t,s(t)) = (\frac{s'(t)}{\alpha})^{1/p} \quad \mbox{ for } t \geq 0. $$
Then, by putting $M = \max\{g^*, h^*\}$ we obtain
\begin{eqnarray*}
 \kappa_1 g_* t
& \leq & 2M \int_0^{s(t)} x dx + \frac{1}{2} |s(t)|^2
+ \frac{\kappa_1}{\alpha^{1/p}} \int_0^t (s'(\tau))^{1/p} d\tau \\
& &  + \kappa_2 (\int_{Q_s(t)} |v_x|^2 dxd\tau )^{1/2}
(\int_{Q_s(t)}  dxd\tau )^{1/2} \quad \mbox{ for } t \geq 0.
\end{eqnarray*}
It is clear that
\begin{eqnarray}
 \frac{\kappa_1}{\alpha^{1/p}} \int_0^t (s'(\tau))^{1/p} d\tau
& \leq & \frac{\kappa_1}{\alpha^{1/p}} (\int_0^t s'(\tau) d\tau)^{1/p} t^{1-1/p} \nonumber \\
& \leq &  \frac{\kappa_1}{\alpha^{1/p}} s(t)^{1/p} t^{1-1/p} \nonumber \\
& \leq &  \frac{1}{4} \kappa_1 g_* t + C_2 s(t) \quad \mbox{ for } t
\geq 0, \label{i2}
\end{eqnarray}
where $C_2$ is some positive constant, and
\begin{eqnarray*}
& & \kappa_2 (\int_{Q_s(t)} |v_x|^2 dxd\tau )^{1/2}
(\int_{Q_s(t)}  dxd\tau )^{1/2} \nonumber \\
& \leq &  \kappa_2
 C_1^{1/2} (1 + s(t))^{1/2} t^{1/2} s(t)^{1/2} \\
& \leq & \frac{1}{4} \kappa_1 g_* y + C_3 (s(t) + s(t)^2) \quad
\mbox{ for } t \geq 0,
\end{eqnarray*}
where $C_3$ is some positive constant.

From the above inequalities we can get
\begin{eqnarray*}
  \frac{1}{2} \kappa_1 g_* t
 \leq  (M + \frac{1}{2} + C_3) |s(t)|^2 + (C_2 + C_3) s(t)
\quad \mbox{ for } t \geq 0.
\end{eqnarray*}
Now, let $t \geq 1$. In this case we see that
\begin{eqnarray*}
  \frac{1}{2} \kappa_1 g_* t
 \leq  (M + \frac{1}{2} + C_3) |s(t)|^2 + C_4 |s(t)|^2 +   \frac{1}{4} \kappa_1 g_* t
\quad \mbox{ for } t \geq 0,
\end{eqnarray*}
where $C_4$ is some positive constant. Thus it holds that
$$ \left(\frac{\kappa_1 g_*}{4(M + 1 + C_3 + C_4)} t\right)^{1/2} \leq s(t) \quad
          \mbox{ for } t \geq 1. $$
In case $0 \leq t \leq 1$, we have $s_0 \sqrt{t} \leq s(t)$.

Therefore, by putting $\nu_0 = \min\{ s_0, \left(\frac{\kappa_1
g_*}{4(M + 1 + C_3 + C_4)} \right)^{1/2}\}$ we conclude that
$$ \nu_0 \sqrt{t} \leq s(t) \quad \mbox{ for } t \geq 0. $$
\end{proof}

\section{Appearance of a moving carbonation front -- The case $s_0 =
0$}\label{front} The aim of  this section is to prove a result
concerning the existence of  weak solutions to P for the case  $s_0
= 0$. This is the case when the free boundary starts off moving
precisely from the  outer boundary [exposed to $CO_2$]. Before
giving the statement of the theorem, we denote for simplicity
$$ C_0((0, T]; X) = \{ z \in C([0,T]:X): z = 0 \mbox{ on } [0,\delta_z) \mbox{ for some } \delta_z >0\}. $$

\begin{theorem} \label{main2}
Let $T > 0$,  and $g$ and $h$ be functions on $[0,T]$ satisfying $g,
h \in W^{1,2}(0,T)$ and $g(t) \geq g_0 > 0$  and $h \geq 0$ for $t
\in [0,T]$, where $g_0$ is a given positive constant. Then under
(A1) there exists a triplet $\{s, u, v\}$ of functions such that $s
\in W^{1,\infty}(0,T)$, $s(0) = 0$, $s(t) > 0$ for $t \in (0,T]$,
$\bar{u}, \bar{v} \in L^{\infty}(Q(T))$, $\bar{u} - g, \bar{v} - h
\in L^2(0,T; X)$, $\bar{u}, \bar{v} \in C((0,T]; H)$, $\bar{u},
\bar{v} \in W^{1,2}_{loc}((0,T]; X^*)$,
\begin{equation}
s'(t) = \psi(\bar{u}(t,1)) \quad \mbox{ for a.e. } t \in [0,T],
\label{ss0}
\end{equation}
\begin{eqnarray}
 & &  \int_0^T \langle \bar{u}_t, z \rangle_X dt +
\int_{Q(T)}  \frac{\kappa_1}{s^2} \bar{u}_y z_y dydt
+ \int_0^T \frac{s'}{s} z(\cdot,1) dt \nonumber \\
& = & \int_{Q(T)} (f(\bar{u}, \bar{v})  - \frac{s'}{s} y\bar{u}_y) z
dy dt \quad
 \mbox{ for } z \in C_0((0, T]; X), \label{ss1}
\end{eqnarray}
\begin{eqnarray}
& &  \int_0^T \langle \bar{v}_t, z \rangle_X dt + \int_{Q(T)}
\frac{\kappa_2}{s^2} \bar{v}_y z_y
  dydt \nonumber
 \\
& = & - \int_{Q(T)} (f(\bar{u}, \bar{v})  + \frac{s'}{s} y \bar{v}_y
)z dy dt \quad
 \mbox{ for } z \in C_0((0, T]; X), \label{ss2}
\end{eqnarray}
where $\bar{u}$ and $\bar{v}$ are functions defined by
(\ref{change}).
\end{theorem}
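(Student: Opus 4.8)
The plan is to construct the desired triplet as a limit of genuine, non-degenerate solutions furnished by Theorem~\ref{previous}. First I would set up an approximation: for $n \in \mathbf{N}$ put $s_{0n} := 1/n$ and fix bounded nonnegative initial data, say $u_{0n} \equiv g(0)$ and $v_{0n} \equiv h(0)$ on $[0,s_{0n}]$. Since $s_{0n} > 0$, Theorem~\ref{previous} yields a unique nonnegative weak solution $\{s_n,u_n,v_n\}$ of P$(s_{0n},u_{0n},v_{0n},g,h)$, with transformed functions $\bar u_n,\bar v_n$ on $Q(T)$. The whole proof then reduces to deriving bounds on $\{s_n,\bar u_n,\bar v_n\}$ that are \emph{uniform in $n$} and to passing to the limit $n\to\infty$.

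Next come the uniform estimates. Lemma~\ref{lem2} gives $0\le u_n\le g^*$ and $0\le v_n\le h^*$, so that $\bar u_n,\bar v_n$ are bounded in $L^\infty(Q(T))$ independently of $n$, whence $0\le s_n'(t)=\psi(\bar u_n(t,1))\le\alpha(g^*)^p$ and $\{s_n\}$ is bounded in $W^{1,\infty}(0,T)$ with $s_n(t)\le 1/n+\alpha(g^*)^pT$. The crucial point is a lower bound for $s_n$ that does not deteriorate as $s_{0n}\to0$. Here I would rerun the computation in the proof of Theorem~\ref{main1}, built on Lemma~\ref{lem1} and an energy inequality of the type in Lemma~\ref{lem3} (suitably adapted to a nonconstant $g$ with $g\ge g_0>0$): every term carried by the vanishing initial domain $(0,s_{0n})$ tends to zero, while $\kappa_1\int_0^t g\,\ge\kappa_1 g_0 t$ survives. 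This produces a quadratic inequality $\tfrac12\kappa_1 g_0 t\le A|s_n(t)|^2+Bs_n(t)$ with $A,B$ independent of $n$, hence a uniform lower barrier
$$ s_n(t)\ge \ell(t):=\frac{\sqrt{B^2+2A\kappa_1 g_0 t}-B}{2A}>0 \qquad (t>0). $$
Combined with the upper bound on $s_n$, the energy inequality then furnishes uniform bounds for $\bar u_n,\bar v_n$ in $L^2(0,T;H^1(0,1))$ (the spatial gradient being controlled through the upper bound on $s_n$, since $\int_0^{s_n}|u_{n,x}|^2\,dx=\tfrac1{s_n}\int_0^1|\bar u_{n,y}|^2\,dy$), and from the weak equations one obtains uniform bounds for $\bar u_{n,t},\bar v_{n,t}$ in $L^2(\delta,T;X^*)$ on every compact $[\delta,T]\subset(0,T]$.

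I would then extract a convergent subsequence. Arzel\`a--Ascoli gives $s_n\to s$ uniformly on $[0,T]$ with $s\in W^{1,\infty}(0,T)$, $s(0)=0$, and $s(t)\ge\ell(t)>0$ for $t>0$, while $s_n'\rightharpoonup s'$ weakly-$\ast$ in $L^\infty$. On each $[\delta,T]$ the Aubin--Lions lemma, using $H^1(0,1)\hookrightarrow\hookrightarrow H^\sigma(0,1)\hookrightarrow X^*$ for some $\sigma\in(1/2,1)$, yields strong convergence of $\bar u_n,\bar v_n$ in $L^2(\delta,T;H^\sigma(0,1))$, hence weak convergence in $L^2(0,T;H^1)$ and, crucially, trace convergence $\bar u_n(\cdot,1)\to\bar u(\cdot,1)$ in $L^2(\delta,T)$; a diagonal argument over $\delta=1/k\to0$ promotes this to all of $(0,T]$. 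Because the admissible test functions $z\in C_0((0,T];X)$ vanish on $[0,\delta_z)$, the singular coefficients need only be handled on $[\delta_z,T]$, where $s_n^{-2}\to s^{-2}$ and $s_n'/s_n\to s'/s$ boundedly; as $f$ is affine, the reaction term passes to the limit by the strong $L^2$ convergence, and one recovers (\ref{ss1})--(\ref{ss2}). Finally, $\psi$ being continuous and the traces converging a.e.\ and boundedly, $\psi(\bar u_n(\cdot,1))\to\psi(\bar u(\cdot,1))$ in $L^2$, so passing to the limit in $s_n'=\psi(\bar u_n(\cdot,1))$ gives (\ref{ss0}); the stated regularity of $\bar u,\bar v$ follows from the $n$-uniform estimates by weak lower semicontinuity.

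The step I expect to be the main obstacle is producing the $n$-uniform lower barrier $\ell(t)$ and, hand in hand with it, taming the degenerate coefficients $\kappa_1/s_n^2$ and $s_n'/s_n$ near $t=0$. Since P admits no comparison principle, this lower bound cannot be obtained by ordering against a sub-solution (the device used by Fasano and Primicerio for the classical one-phase Stefan problem), so it must be extracted purely from the integral identities of Lemma~\ref{lem1} and Lemma~\ref{lem3}. The feature that makes the passage to the limit tractable is precisely the choice of the test space $C_0((0,T];X)$, which confines all singular quantities to a region $[\delta_z,T]$ on which the uniform bounds are non-degenerate; the trace convergence needed to close the free-boundary relation (\ref{ss0}) is then the remaining delicate point, settled by the one-dimensional compactness above.
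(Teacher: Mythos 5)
Your proposal is correct and follows essentially the same route as the paper: approximate with $s_{0n}>0$, invoke Theorem \ref{previous}, get $n$-uniform sup and gradient bounds (the paper uses the energy estimate of \cite[Lemma 4.2]{Ai-Mun1} in place of your adapted Lemma \ref{lem3}, which handles nonconstant $g,h$ directly), extract the uniform lower barrier for $s_n$ from the mass-balance identity of Lemma \ref{lem1}, bound $\bar u_{nt},\bar v_{nt}$ in $L^2(\delta,T;X^*)$, and pass to the limit against test functions vanishing near $t=0$. The only cosmetic difference is the shape of the lower barrier (your quadratic-inequality root versus the paper's power law $s_n(t)\geq\nu_1 t^{1/\mu}$ with $\mu=\min\{1/p,1/2\}$); both suffice to tame the degenerate coefficients on $[\delta_z,T]$.
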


\begin{proof}
First, let $\{s_{0n}\}$ be a sequence satisfying $s_{0n} > 0$ for
each $n$ and $s_{0n} \to 0$ as $n \to \infty$ and put $u_{0n} =
g(0)$  and $v_{0n} = h(0)$ on $[0,s_{0n}]$.  Then,  Theorem
\ref{previous} guarantees that P$(s_{0n}, u_{0n}, v_{0n}, g, h)$ has
a unique weak solution $\{s_n, u_n, v_n\}$ on $[0,T]$. Here, we
denote by $\bar{u}_n$ and $\bar{v}_n$ the functions defined by
(\ref{change}) with $s = s_n$, $u = u_n$ and $v = v_n$ for each $n$.
Since we can take positive constants $g^*$ and $h^*$ such that $g
\leq g^*$ and $h \leq h^*$ on $[0,T]$, $u_{0n} \leq g^*$ and $v_{0n}
\leq h^*$ on $[0,s_{0n}]$ for $n$ and $g^* = \gamma h^*$, Lemma
\ref{lem2} implies that
\begin{equation}
0 \leq u_n \leq g^*, 0 \leq v_n \leq h^* \quad \mbox{ on }
Q_{s_n}(T) \mbox{ for any } n. \label{bb}
\end{equation}
By (S3) and this shows that  $|s_n'(t)| \leq \psi(g^*)$ for $t \in
[0,T]$ and $n$ so that the set  $\{s_n\}$ is bounded in
$W^{1,\infty}(0,T)$. Clearly, there exists a positive constant $L_1$
such that $0 \leq s_n(t) \leq L_1$ for $t \in [0,T]$ and $n$.

Next, the following estimate is a direct consequence of \cite[Lemma
4.2]{Ai-Mun1}: For each $n$
\begin{eqnarray}
& &    \kappa_1 \int_0^t \int_0^{s_n(\tau)} |u_{nx}|^2 dx d\tau
  + \kappa_2 \int_0^t \int_0^{s_n(\tau)} |v_{nx}|^2 dx d\tau   \nonumber \\
& \leq & 2(C_f^2 + 1) \int_0^t \int_0^{s_n(\tau)} ( |u_n(\tau) -
g(\tau)|^2 +
         |v_n(\tau) - h(\tau)|^2) dx d\tau \nonumber  \\
& & + 2 \int_0^t s_n(\tau) ( |f(g(\tau),h(\tau))|^2 + |g_{\tau}(\tau)|^2 + |h_{\tau}(\tau)|^2) d\tau   \nonumber  \\
& & +  \int_0^t s_n'(\tau)(3|g(\tau)|^2 + |g(\tau)| + |h(\tau)|^2)
d\tau
      \quad     \mbox{ for } t \in  [0,T], \nonumber
\end{eqnarray}
where $C_f:= \beta \gamma$. Because of the boundedness of $\{s_n\}$
and (\ref{bb}) there exists a positive constant $M_2$ such that
$$ \int_0^T \int_0^{s_n(\tau)} |u_{nx}|^2 dxd\tau +
 \int_0^T \int_0^{s_n(\tau)} |v_{nx}|^2 dxd\tau \leq M_2 \mbox{ for } n. $$
Then, easily, we can obtain that $\{\bar{u}_{ny}\}$ and
$\{\bar{v}_{ny}\}$ are bounded in $L^2(Q(T))$.

From now on we provide the  estimate from below for the free
boundary  as follows. To do so from Lemma \ref{lem1} it follows that
\begin{eqnarray}
 \kappa_1  g_0 t & \leq & \int_0^{s_n(t)} x u_n(t) dx
+ \frac{1}{2} |s_n(t)|^2 + \kappa_1 \int_0^t u_n(\tau, s_n(\tau))
d\tau
   \nonumber \\
& & + \int_0^{s_n(t)} x v_n(t) dx
+ \kappa_2 \int_0^t \int_0^{s_n(\tau)} v_{nx}(\tau,x) dx d\tau \nonumber \\
& =: &  J_{1n} (t) + J_{2n}(t) + J_{3n}(t) + J_{4n}(t)  + J_{5n}(t)
\quad
 \mbox{ for } t \geq 0 \mbox{ and } n. \nonumber
\end{eqnarray}
Here, it is obvious that
$$ \left. \begin{array}{l}
J_{1n}(t) + J_{4n}(t) \leq \frac{1}{2} (g^* + h^*) |s_n(t)|^2 \\
 J_{5n}(t) \leq \kappa_2M_2^{1/2} (t s_n(t))^{1/2}
\end{array} \right\}
\mbox{ for } t \in [0,T]
  \mbox{ and } n.  $$
Similarly to (\ref{i2}), by using (\ref{FBC}) we observe that
\begin{eqnarray*}
J_{3n}(t)  \leq \frac{\kappa_1}{\alpha^{1/p}} \int_0^t
|s_n'(\tau)|^{1/p} d\tau \leq \frac{\kappa_1
T^{1-1/p}}{\alpha^{1/p}} s_n(t)^{1/p} \mbox{ for } t \in [0,T]
  \mbox{ and } n.
\end{eqnarray*}
From the above inequalities we have
\begin{eqnarray*}
& &  \kappa_1  g_0 t \\
& \leq &
 \frac{g^* + h^* + 1}{2} s_n(t)^2 +
\frac{\kappa_1 T^{1-1/p}}{\alpha^{1/p}} s_n(t)^{1/p}
+ \kappa_2(M_2T)^{1/2}  s_n(t)^{1/2} \\
& \leq &  \left(\frac{g^* + h^* + 1}{2} L_1^{2-\mu}  +
\frac{\kappa_1 T^{1-1/p}}{\alpha^{1/p}} L_1^{1/p - \mu}
+ \kappa_2(M_2T)^{1/2} L_1^{1/2 - \mu}\right)  s_n(t)^{\mu} \\
& =: & M_3 s_n(t)^{\mu} \mbox{ for } t \in [0,T]   \mbox{ and } n,
\end{eqnarray*}
where $$\mu := \min\{1/p, 1/2\}$$ so  that
\begin{equation}
 s_n(t) \geq \nu_1 t^{1/\mu} \mbox{ for } t \in [0,T]   \mbox{ and } n,
\label{ffc}
\end{equation}
where $\nu_1$ is a positive constant independent of $n$.

As next step, we wish to estimate the time derivative of
$\bar{u}_n$. Let $\delta > 0$ and $\eta \in L^2(\delta, T; X)$. Then
(S4) implies that
\begin{eqnarray*}
& &  |\int_{\delta}^{T} \langle \bar{u}_{nt}(t), \eta(t) \rangle_X dt| \\
& \leq &
 |\int_{\delta}^T \frac{\kappa_1}{s_n^2(t)} (\bar{u}_y(t),  \eta_y(t))_H dt|
+ |\int_{\delta}^{T}
 (\frac{s_n'(t)}{s_n(t)} \bar{u}_n(t,1) + \frac{s_n'(t)}{s_n(t)} ) \eta(t,1) dt|  \\
&   & + |\int_{\delta}^T (f(\bar{u}_n(t),\bar{v}_n(t)),\eta(t))_Hdt|
 + |\int_{\delta}^T \frac{s_n'(t)}{s_n(t)}( y \bar{u}_{ny}(t), \eta(t))_H dt| \\
& =: & I_{1n} + I_{2n} + I_{3n} + I_{4n}.
\end{eqnarray*}
Obviously, on account of (\ref{ffc}) it holds that
\begin{eqnarray*}
I_{1n} & \leq &
 \kappa_1 \int_{\delta}^{T} \frac{1}{\nu_1^2 t^{2/\mu}} |\bar{u}_{ny}(t)|_H |\eta_y(t)|_H dt \\
& \leq & \frac{\kappa_1}{\nu_1^2 \delta^{2/\mu}}
    |\bar{u}_{ny}|_{L^2(\delta, T; H)} |\eta_y(t)|_{L^2(\delta, T; H)};
\end{eqnarray*}
\begin{eqnarray*}
I_{2n} & \leq &
  \frac{\psi(g^*)}{\mu_1 \delta^{1/\mu}}(|\bar{u}_n|_{L^2(0,T; H^1(0,1))}+ T^{1/2})
 |\eta|_{L^2(\delta,T;X)};
\end{eqnarray*}
\begin{eqnarray*}
I_{3n} & \leq &
  \beta(\gamma h^* + g^*)  T^{1/2}  |\eta|_{L^2(\delta,T;X)};
\end{eqnarray*}
\begin{eqnarray*}
I_{4n} & \leq & \frac{\psi(g^*)}{\nu_1 \delta^{1/\mu}}
  |\bar{u}_{ny}|_{L^2(0,T;H)} |\eta|_{L^2(\delta,T;X)} \quad \mbox{ for } n.
\end{eqnarray*}
Hence, the set $\{\bar{u}_{nt}\}$ and $\{\bar{v}_{nt}\}$ are bounded
in $L^2(\delta,T; X^*)$
 for each $\delta >0$.

From these estimates we can take a subsequence $\{n_j\} \subset
\{n\}$ satisfying $s_{n_j} \to s$ weakly* in $W^{1,\infty}(0,T)$ and
$C([0,T])$, and $\bar{u}_{n_j} \to  \bar{u}$ and $\bar{v}_{n_j} \to
\bar{v}$ weakly* in $L^{\infty}(Q(T))$ and weakly in $L^2(0,T;
H^1(0,1))$, in $C([\delta,T]; H)$ and weakly in $ W^{1,2}(\delta, T;
X^*)$ for each $\delta > 0$   as $j \to \infty$, where $s \in
W^{1,\infty}(0,T)$, $\bar{u}, \bar{v} \in L^{\infty}(Q(T))$,
$\bar{u} - g, \bar{v} - h \in L^2(0,T; X)$ and $$\bar{u}, \bar{v}
\in C((0,T]; H) \cap W_{loc}^{1,2}((0, T]; X^*).$$
 By (\ref{ffc}) we
have $s(t) > 0 $ for $t > 0$. Also, $s(0) = 0$.

In order to complete the proof of the Theorem, it is necessary to
show that (\ref{ss1}), (\ref{ss2}) and (\ref{ss0}) hold. Let $\eta
\in C_0((0,T]; X)$. Then $\eta = 0$ on $[0,\delta]$ for some $\delta
> 0$. By taking $z = \eta$ in (S4) we infer that

\begin{eqnarray*}
& & \int_{\delta}^{T} \langle \bar{u}_{n_j t}(t), \eta(t) \rangle_X
dt
+ \int_{\delta}^T \frac{\kappa_1}{s_{n_j}^2(t)} (\bar{u}_{n_j y}(t), \eta_y(t) )_H dt \\
& & + \int_{\delta}^{T}
 (\frac{s_{n_j}'(t)}{s_{n_j}(t)} \bar{u}_{n_j}(t,1)
     + \frac{s_{n_j}'(t)}{s_{n_j}(t)} ) \eta(t,1) dt  \\
&  = & \int_{\delta}^T (f(\bar{u}_{n_j}(t),\bar{v}_{n_j}(t))
       + \frac{s_{n_j}'(t)}{s_{n_j}(t)} y \bar{u}_{n_j y}(t),  \eta(t))_H dt \quad
\mbox{ for } j.
\end{eqnarray*}
Elementary calculations yield:
\begin{equation}
\bar{u}_{n_j}(\cdot,1) \to \bar{u}(\cdot,1) \mbox{ in }
L^4(\delta,T) \mbox{  as } j \to \infty \label{con}
\end{equation}
so that
$$ \int_{\delta}^{T}
 \frac{s_{n_j}'(t)}{s_{n_j}(t)} \bar{u}_{n_j}(t,1)   \eta(t,1) dt  \to
\int_{\delta}^{T}
 \frac{s'(t)}{s(t)} \bar{u}(t,1)   \eta(t,1) dt \quad \mbox{ as } j \to \infty.
$$
Moreover, we can obtain $s_{n_j}' \to s'$ in $L^4(\delta, T)$ as $j
\to \infty$ and
$$
\int_{\delta}^T (\frac{s_{n_j}'(t)}{s_{n_j}(t)} y \bar{u}_{n_j
y}(t),  \eta(t))_H dt \to \int_{\delta}^T (\frac{s'(t)}{s(t)} y
\bar{u}_{y}(t),  \eta(t))_H dt \mbox{ as } j \to \infty. $$
Therefore, we can prove that (\ref{ss1}) holds. Similarly,
(\ref{ss2}) is valid. Finally, by (\ref{con}) we get (\ref{ss0}).
Thus the proof of this theorem has been finished.
\end{proof}
\section*{A practical comment} Before using  in the engineering
practice the $\sqrt{t}$ information for forecasting purposes, the
practitioner should be aware of the fact that its validity is
closely related to the validity of the underlying free-boundary
model P. Relying on our working experience with such FBPs for
carbonation (based on \cite{Chem}, e.g.), we can say that P captures
well accelerated carbonation tests, but it may not be suitable for
predicting the evolution of carbonation scenarios under natural
exposure conditions.

\section*{Acknowledgment} The initial phase of this research has been supported
 by the German Science Foundation (DFG), special program SP 1122.

\bibliographystyle{plain}

\end{document}